\DeclareMathOperator{\sgn}{sgn}
\newtheorem{theorem}{Theorem}
\newcommand{\mza}[1]{{#1}}
\newcommand{\mzo}[1]{{#1}}
\newcommand{\pd}[1]{{#1}}
\newcommand{\mzc}[1]{{#1}}
\newcommand{\mzz}[1]{{#1}}
\newcommand{\mzu}[1]{{#1}}
\newcommand{\Ll}[1]{\overleftarrow{#1}}
\newcommand{\Rr}[1]{\overrightarrow{#1}}
\newcommand{\W}[1]{\widetilde{#1}}
\newcommand{\Tr}{\operatorname{Tr}}
\newcommand{\tr}{\operatorname{tr}}
\begin{document}
	
	\title{Chiral anomaly in inhomogeneous systems with nontrivial momentum space topology}
	
	\author{Praveen D. Xavier}
	\email{praveen.xavier@msmail.ariel.ac.il}
		\affiliation{Physics Department, Ariel University, Ariel 40700, Israel}
	
	\author{M.A.Zubkov}
	\email{mikhailzu@ariel.ac.il}
	\affiliation{Physics Department, Ariel University, Ariel 40700, Israel}
	
	\date{\today}
	
	\begin{abstract}
		We consider the chiral anomaly for systems with a wide class of Hermitian Dirac operators ${Q}$ in 4D Euclidean spacetime. We suppose that $ Q$ is not necessarily linear in derivatives and also that it contains a coordinate inhomogeneity unrelated to that of the external gauge field. We use the covariant Wigner-Weyl calculus (in which the Wigner transformed two point Greens function \mzo{belongs to the two-index tensor representation of the gauge group}) and point splitting regularization to calculate the global expression for the anomaly. The Atiyah-Singer theorem can be applied to relate the anomaly to the topological index of $ Q$. We show that the topological index factorizes (under certain assumptions) into the topological invariant $\frac{1}{8\pi^2}\int \text{tr}(F\wedge F)$ (composed of the gauge field strength) multiplied by a topological invariant $N_3$ in phase space. The latter is responsible for the topological stability of Fermi points/Fermi surfaces and is related to the conductivity of the chiral separation effect. 
	\end{abstract}
	\pacs{}
	
	\maketitle

\section{Introduction}

The chiral anomaly is by now a famous result in quantum field theory \cite{adler1969axial, bell1969pcac, bertlmann2000anomalies, fujikawa1979path, fujikawa2004path}. In a system with chiral symmetry, and an associated conserved chiral current \mzo{on the level of classical Lagrangian}, it measures the degree to which the quantum expectation value violates the conservation law. It was first discovered in quantum electrodynamics \cite{adler1969axial, bell1969pcac}; and quickly extended to Yang-Mills theories \cite{bardeen1969anomalous}. \mzo{The} \pd{integrated (i.e. global)} anomaly turned out to be a topological invariant of the gauge field configuration -- namely the second Chern class of the bundle \cite{alvarez1984topological}. This peculiarity found its explanation through the Atiyah-Singer (AS) index theorem \cite{atiyah1963index} (a mathematical theorem predating the discovery of the anomaly by around 5 years). 
The AS theorem, simply stated, equates the ``algebraic index'' of an elliptic differential operator to its ``topological index'' (which in turn has an expression in terms of an integral of a density over the ``phase space'' \cite{fedosov1996deformation}). 
\pd{The connection of this theorem with the anomaly comes from the method of Fujikawa \cite{fujikawa1979path,fujikawa1980path} who revealed that the the global anomaly is proportional to the algebraic index of the differential operator
governing the dynamics of the Dirac fermion.}
If we assume the operator is elliptic, the AS theorem is applicable, and the anomaly can be equated to the topological index of the operator. 
Due to the generality of the AS theorem and Fujikawa method, though, one can assert that the \mzo{global} chiral anomaly arising from \emph{any} elliptic operator is equal to a topological invariant. 
Let us consider this in more detail.

In $D=4$ Euclidean spacetime dimensions consider a Dirac fermion $\psi$ in the fundamental representation of some gauge group $G$ with Lie algebra $\mathfrak g$; and let it be governed by the partition function
\mzc{\begin{align}
Z=&\int D\bar\psi D\psi \,e^{\int d^4x \,\bar\psi(x){Q}\psi(x)}\\
\text{where}\quad  Q=& \left(\begin{array}{cc}
	0 &   {O}^\dagger\\  {O} & 0 
\end{array} \right) \label{Q}
\end{align}}
and $ O$ is a \mzo{matrix} of differential operators which we suppose have the following form \cite{bonora2023fermions} (for some positive integer $m$):
\begin{equation}
 {O} = \sum_{|\alpha| \leq m} f_{\alpha}(x) (-i\partial)^\alpha \label{Oop}
\end{equation}
with the multi-index $\alpha=(\alpha_1,\alpha_2,\alpha_3,\alpha_4)$, $|\alpha| := \sum_\mu \alpha_\mu$ and 
$(-i\partial)^\alpha := \prod_\mu (-i\partial_\mu)^{\alpha_\mu}$.
Here $f_{\alpha}(x)$ is a \mzo{ matrix-valued function of the coordinates. \pd{(}This is a matrix in both spinor \pd{(2x2)} and internal spaces. The internal space contains both a fundamental representation of $G$ and another internal space.\pd{)}} 
By the `\mzo{principal} symbol' of ${O}$ we will denote the matrix-valued function of $x$ and $p$ \cite{bonora2023fermions}:
\begin{equation}
o(x,p) := \sum_{|\alpha| = m} f_{\alpha}(x) p^\alpha
\end{equation} 
Let us assume that $ O$ is elliptic: i.e. the matrix $o(x,p)$ is invertible when $p \ne 0$ \cite{freed2021atiyah_arxiv}. It is then also a Fredholm operator \cite{fedosov1996deformation} (Prop. 4.2.2), i.e.  ${\rm dim}\,{\rm ker}\,{O}$ and ${\rm dim}\,{\rm ker}\,{O}^\dagger$  are finite.
The algebraic index of ${O}$ is defined as the difference:
\begin{equation}
    {\rm index}\,{O} := {\rm dim}\,{\rm ker}\,{O} - {\rm dim}\,{\rm ker}\,{O}^\dagger 
\end{equation} 
In the chiral basis $\gamma_5=({\bm 1}_2,-{\bm 1}_2)$. It is easy to see then that \cite{nakahara2003geometry}
\mzc{\begin{equation}
    n_+-n_-={\rm dim}\,{\rm ker}\,{O}-{\rm dim}\,{\rm ker}\,{O}^\dagger ={\rm index}\, O \label{ind}
\end{equation}}
where $n_+$ (resp. $n_-$) is defined as the number of zero modes of $ Q$ with positive (resp. negative) chirality (i.e. eigenvalues under multiplication by $\gamma_5$ are $\pm 1$ respectively). It is easy also to see that $\{\gamma_5,Q\}=0$, i.e. chiral transformations (in which $\psi\to e^{i\alpha\gamma_5}\psi$, $\alpha\in\mathfrak g$) are a symmetry of the action. Let us denote the corresponding Noether current of the symmetry by $J_\mu$, which lies in the \mzo{two-index reducible representation of $G$} and which satisfies the \emph{classical} conservation law $\mathcal D_\mu J_\mu=0$ where $\mathcal D=\partial-i[A,]$ is the gauge covariant derivative acting in the above \mzo{mentioned} representation. However, the \emph{quantum} expectation value $\langle \mathcal \int \tr \mathcal {D}_\mu J_\mu\rangle$ doesn't necessarily have to vanish. In fact a short argument reveals that \cite{nakahara2003geometry}
\begin{equation}
    \mathscr A:=\mzo{}\int\langle \mathcal  \tr \mathcal{D}_\mu J_\mu\rangle=\mzo{}2i(n_+-n_-) \label{??}
\end{equation}
($\mathscr A$ is the anomaly.)
The argument goes as follows:
under a change of variables corresponding to a chiral transformation $\psi\to\psi'= e^{i\alpha\gamma^5}\psi$, $\bar\psi\to\bar\psi'=\bar\psi e^{i\alpha\gamma^5}$, $\alpha\in \pd{ \mathbf R}$, the partition function $Z$ must be unchanged. There are two terms which contribute to the variation of $Z$: one from the change of measure \cite{fujikawa1979path} and another from the action. The measure transforms as $D\psi D\bar\psi\to D\psi D\bar\psi\det^{-1}(\delta \psi'/\delta\psi)\det^{-1}(\delta\bar\psi'/\delta\bar\psi)=D\psi D\bar\psi\exp(-2i\alpha\text{Tr}(\gamma_5))$ where we've used the formula $\det=\exp\Tr\log$, with $\text{Tr}$ standing for the trace over eigenstates of $Q$; and the action transforms as $S\to S+\alpha\int \text{tr}\mathcal D_\mu J_\mu+O(\alpha^2)$.
So
\begin{equation}
    \frac{d Z}{d\alpha}\bigg{|}_{\alpha=0}\overset{!}{=}0=-2i\text{Tr}(\gamma_5)+\int \left\langle \tr \mathcal{D}_\mu J_\mu\right\rangle
\end{equation}
Since $\{ Q,\gamma_5\}=0$, $\gamma_5$ sends an eigenstate of $Q$ with eigenvalue $\lambda$ to an eigenstate with eigenvalue $-\lambda$. And since $ Q$ is Hermitian, eigenstates of different eigenvalue are orthogonal. Therefore, what remains in $\text{Tr}(\gamma_5)$ is only the trace over the zero modes of $ Q$. Those zero modes can be classified by their chirality -- so finally $\text{Tr}(\gamma_5)=n_+-n_-$ which proves \eqref{??}. 

Now, a version of the Atiyah-Singer index theorem for elliptic operators over Euclidean space \cite{fedosov1996deformation} (Prop. 4.2.8) gives us
\begin{equation}
    {\rm index}\, O=\int d^4x d^4p \,\text{ch}(\xi)(x,p)=\text{topological index } O \label{AS}
\end{equation}
where $\text{ch}(\xi)$ is the Chern character of the associated ``virtual bundle'' $\xi$ (see \cite{fedosov1996deformation} for the full details.) The RHS is the ``topological index'' of $ O$.
Combining \eqref{ind}, \eqref{??} and \eqref{AS} we obtain
\begin{equation}
    \mathscr A=\mzo{2i}\int d^4x d^4p \,\text{ch}(\xi)(x,p)=\mzo{2i}\times\text{topological index } O\label{AS2}
\end{equation}
When $ Q$ is the conventional Dirac operator $ Q=i\gamma_\mu D_\mu$ then the Chern character reduces to the second Chern class $\pd{-}\frac{1}{8\pi^2}\int\text{tr}F\wedge F$ of the given principle G-bundle, giving the conventional form of the anomaly: 
\begin{equation}
	\mathscr A= \mzo{-\frac{i}{4\pi^2}}\int\text{tr}F\wedge F\label{AS3}	
\end{equation}
In this paper we will consider a wide class of $ Q$. We assume that the gauge field is minimally coupled to the derivative everywhere, so that $Q$ can be written as
\begin{equation}
    Q=\sum_{|\alpha|\leq m}  c_\alpha(x) (-iD)^\alpha
\end{equation}
where $D:=\partial-iA$ and $c_\alpha(x)$ is a \mzo{$4\times 4$ matrix of functions, which are themselves matrices in internal space, but are singlets in $G$.} 
\pd{(We will require that $Q$ is Hermitian and $\{\gamma_5,Q\}=0$ which implies that it has the form \eqref{Q} so that the previous analysis is applicable.)}
The only \pd{significant} restriction we put on $Q$ is that $Q^{(A=0)}=\sum_{|\alpha|\leq m}  c_\alpha(x) (-i\partial)^\alpha$ (i.e. with the gauge field set to zero) has a (conventional) ``Wigner transform'' \cite{zachos2005quantum,chernodub2017scale,zhang2020influence} $Q_W(x,p)$ 
homotopic to a function of only $p$. (This hypothesis captures our essential assumption that topology in coordinate space arises only from the external gauge field configuration.)
Under this hypothesis we will show that 
\begin{align}
    \mathscr A=& \mzo{-N_3}\times \frac{i}{4\pi^2}\int\text{tr}F\wedge F\\
    \text{where}\quad N_3:=&\mzo{\frac{1}{48\pi^2|V|}}\int d^3\vec x\int _\Sigma\text{tr}_D\left(\gamma^5  G^{(0)}\star d Q_W\star  G^{(0)}\star\wedge d  Q_W \star G^{(0)}\star\wedge d  Q_W\right)
\end{align}
(a neat factorization). Here the 3-surface $\Sigma$ defined as the union of the two hyperplanes $p_4=0^{\pm}$, while $G^{(0)}(x,p)$ is defined as solution of equation $G^{(0)}\star Q_W = 1$ with standard Moyal product $\star$. $N_3$ is responsible for the stability of Fermi surfaces/Fermi points \cite{volovik2003universe}. In parallel $N_3$ enters the expression for the conductivity of the \mzo{chiral separation effect (CSE)} \cite{zubkov2023effect}. 

We calculate the anomaly using the ``covariant Wigner-Weyl calculus'' method developed by us earlier in \cite{zubkovxavier} and point splitting regularization \cite{schwinger1951gauge, peskin1995introduction, bertlmann2000anomalies, ioffe2006axial} to regulate ultraviolet divergencies.

\section{Covariant Wigner-Weyl calculus}
In a previous paper \cite{zubkovxavier} we developed the machinery of ``covariant Wigner-Weyl calculus''. Below is a summary and introduction to the method:
\begin{itemize}
    \item Let us work in 4 Euclidean spacetime dimensions. Let $A_\mu(x)$ be an external gauge field of a matrix Lie group $G\subset GL(N,\bold C)$ with Lie algebra $\mathfrak g$. Let us introduce a Hilbert space, $H$, of a suitable space of functions over $\bold R^4$. And let us use the ``bra-ket'' notation \cite{griffiths2018introduction} for states in $H$. The ``position and momentum operators'' will be denoted by $\hat x_\mu$ and $\hat p_\mu$ respectively (i.e. $\hat x_\mu$ corresponds to $\psi(x)\to x_\mu\psi(x)$ and $\hat p_\mu$ corresponds to $\psi(x)\to-i\partial_\mu \psi(x)$). \mzo{We suppose that in addition to the space of the gauge group (i.e. of its fundamental representation) there is also another internal space of dimension $M$. Let $\hat X$ be a $(N\times M\times 4)\times (N\times M \times 4)$ matrix of operators in the Hilbert space ($4$ is the dimension of Dirac spinors).} Then the ``covariant Wigner transform'' (hereafter simply the ``Wigner transform'') of $\hat X$ is defined as
\begin{align}
	  X_{{W}}(x,p):=&\int d^4y\,e^{ipy}\,U(x , x-y/2)\bra{x-y/2}\hat{X}\ket{x+y/2}  U(x+y/2 , x) \label{12fe}\\        \text{where}\quad U(y,x)=&\text{Pexp}\left(i\int_{x\to y} dz^\mu \,  A_\mu(z)\right)
\end{align}
$U(y,x)$ is the path-ordered exponential along the straight path from $x$ to $y$. Definition \eqref{12fe} first appeared in \cite{vasak1987quantum} for Abelian gauge fields.
\newline It is ``covariant'' in the sense that $X_W(x,p)$ \mzo{belongs to the two-index reducible tensor representation of $G$, and it} transforms as in the adjoint representation under gauge transformations: $X_W(x,p)\to \Omega(x) X_{{W}}(x,p) \Omega(x)^\dagger$. 

    \item \eqref{12fe} can also be written as
    \begin{align}
      X_{{W}}(x,p)&=\int d^4y\,e^{ipy}\,\bra{x}e^{-\frac{i}{2}y\hat\pi}\hat X \,e^{-\frac{i}{2}y\hat\pi}\ket{x} \\
    \text{where}\quad \hat\pi_\mu&:=\hat p_\mu-A_{\mu}(\hat x) \label{pi}
    \end{align}
    This follows from the following identities:
    \begin{align}
    e^{-iy\hat{ {\pi}}}\ket{x}&=\ket{x+y} {U(x+y,x)}\label{usin1}\\
    \text{and}\quad\bra{x}	e^{iy\hat{{\pi}}}&={  U(x , x+y)}\bra{x+y} \label{usin5}
    \end{align}
    which in turn follow from the following functional identities:
    \begin{align}
    \exp(y D)\psi(x)
    &=U(x,x+y)\psi(x+y)\\
    \exp(y \mathcal D)\Psi(x)
    &=U(x,x+y)\Psi(x+y)U(x+y,x) \label{adjoint}
    \end{align}
    where $\psi$/$\Psi$ is in the fundamental/adjoint representation and $D:=\partial-iA$ and $\mathcal D:=\partial-i[A,]$.
    
    \item The inverse (Weyl) transform of \eqref{12fe} is
\begin{equation}
    \hat X=(2\pi)^{-8}\int d^4q d^4y d^4x d^4p \,e^{\frac{i}{2}y(\hat{ {\pi}}-p)}e^{iq(\hat x-x)} {X}_{{W}}(x,p)\,e^{\frac{i}{2}y(\hat{ {\pi}}-p)} \label{2}
\end{equation}


\item For two operators $\hat X$ and $\hat Y$, we define the star product as $X_W\bigstar Y_W:=(\hat X\hat Y)_W$. The formula for it is 
\begin{align}
\begin{split}
    &(X_W\bigstar Y_W)(x,p)\\
    =&(2\pi)^{-8}\int d^4yd^4k d^4y'd^4k'\,e^{-iy(k-p)-iy'(k'-p)}\times \\
    &U(x , x-(y+y')/2) \, U(x-(y+y')/2 , x-y'/2) \, X_W(x-y'/2,k) \, U(x-y'/2 , x+(y-y')/2)\\
    & U(x+(y-y')/2, x+y/2) \, Y_W(x+y/2,k') \, U(x+y/2,  x+(y+y')/2) \, U(x+(y+y')/2 , x)
\end{split} \label{starer}
\end{align}
The integrand is the triangular Wilson loop from $x\to x+(y+y')/2\to x+(y-y')/2\to x-(y+y')/2\to x$ with $X_W$ and $Y_W$ inserted at $x-y'/2$ and $x+y/2$, respectively, on the triangle -- see Fig. \ref{mfig}.
\begin{figure}
    \centering
    \includegraphics[scale=0.3]{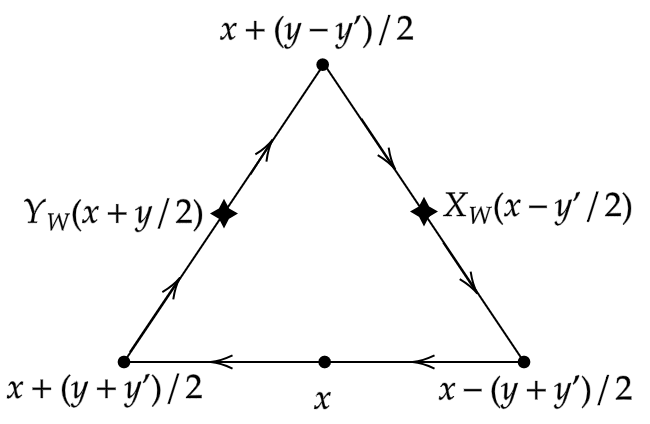}
    \caption{The meaning of integrand \eqref{starer}.}
    \label{mfig}
\end{figure}

In comparison, the ``Moyal star product'' \cite{hillery1984distribution, moyal1949quantum,zachos2005quantum} for two functions $f, g$ on the phase-space with trivial gauge indices (i.e. proportional to the identity) is
\begin{equation}
    (f\star g)(x,p):=(2\pi)^{-8}\int d^4yd^4k d^4y'd^4k'\,e^{-iy(k-p)-iy'(k'-p)}
		 f(x-y'/2,k) 
		  g(x+y/2,k') \label{moyalstar}
\end{equation}

\item Denoting trace over the gauge indices by $\text{tr}_G$ and trace over the Hilbert space by $\text{tr}_H$ we have
\begin{align}
    {\rm tr}_G {\rm tr}_H (\hat{X} \hat{Y})= (2\pi)^{-4}\int d^4x d^4p \,{\rm tr}_G (X_W  Y_W) \label{TrXY}
\end{align}
\mzo{Also, in the following, we will denote by $\tr_D$ the trace over spinor indices and also over the $M$-dimensional extra
internal space (not related to the gauge group).}


\item 
\mza{(\ref{starer}) allows us to derive the following identities:
\begin{equation}
\int \frac{d^4x d^4p}{(2\pi^4)}	{\rm tr_G}(X_W(x,p)\bigstar Y_W(x,p)) = \int \frac{d^4x d^4p}{(2\pi^4)}	{\rm tr}_G(X_W(x,p) Y_W(x,p)) = \int \frac{d^4x d^4p}{(2\pi^4)}	{\rm tr}_G(Y_W(x,p)\bigstar X_W(x,p)) \label{cyclTr}
\end{equation}
provided that these integrals are convergent.}

\end{itemize}

\section{The model}

\subsection{Action in terms of Wigner-Weyl calculus}

Let $\psi(x)$ denote a four-component spinor transforming in the fundamental representation of the gauge group. The partition function of the model is
\begin{align}
    Z=&\int D\bar\psi D\psi\,e^S \label{pf} \\
    \text{with}\quad S=&\int d^4x \,\bar\psi(x)Q(x,-iD)\psi(x)\label{ac}\\
    \text{and}\quad Q(x,-iD)=&\sum_{|\alpha|\leq m}  c_\alpha(x) (-iD)^\alpha \label{sym}
\end{align}
\eqref{sym} is written in the multi-index notation of the introduction. $m$ is the \emph{order} of $Q$ and it is a positive integer. $D_\mu:=\partial_\mu-iA_\mu$ is the covariant derivative. $c_{\alpha}(x)$ is a \mzo{$(4M)\times (4M)$} matrix of functions.
We require that $Q^\dagger = Q$ and $\{\gamma_5, Q\}=0$, where $\gamma_5=(\bold 1_2,-\bold 1_2)$. This implies that $Q$ has the form $ Q=\begin{pmatrix}
    0&  O\\
    O^\dagger& 0
\end{pmatrix}$ as in \eqref{Q}. We also require that $O$ is elliptic as discussed in the introduction.
The ``inhomogeneity'' in this model comes from the explicit $x$ dependence in $Q(x,-iD)$ -- \emph{we will assume, though, that this doesn't include $x_4$}. 

In the Hilbert space operator notation \eqref{ac} can be written as 
\begin{align}
    S&=-\text{tr}_{D}\text{tr}_{ G}\text{tr}_{H}\left(\hat Q\hat\rho\right)\\
    \text{where}\quad\hat Q&:=Q(\hat x,\hat \pi)\\
    \text{and}\quad\bra{x}\hat{\rho}\ket{y}&:=\psi(x)\bar\psi(y) \label{usin}
\end{align}
where $\text{tr}_D$, $\text{tr}_G$ and $\text{tr}_H$ are traces w.r.t. \mzo{the $4M$ spinor and extra internal space indices, $N$ gauge indices and the
Hilbert space respectively.}

\emph{We note that $Q_W(x,p)$, the Wigner transform of $\hat Q$, is independent of the gauge field (therefore it is proportional to the identity matrix as far as gauge indices are concerned).} In fact $Q_W(x,p)$ is the \emph{ordinary Wigner transform} \cite{zachos2005quantum} of $\hat Q$ with the gauge field set to zero. That is 
\begin{equation}
    Q_W(x,p)=\int d^4y\,e^{ipy}\bra{x-y/2}\hat{Q}^{(A=0)}\ket{x+y/2} 
\end{equation}
where $\hat{Q}^{(A=0)}=\sum_{|\alpha|\leq m}  c_\alpha(\hat x) \hat p^\alpha$.




\subsection{Regularization}

The trace of $\hat\rho$ involves the product of two fields at the same point: $\bra{x}\hat \rho\ket{x}=\psi(x)\bar\psi(x)$. This may lead to a short-distance singularity (i.e. an ultraviolet divergence) \cite{collins1984renormalization}. A method of regularizing this is ``point-splitting'' \cite{schwinger1951gauge, peskin1995introduction, bertlmann2000anomalies, ioffe2006axial}. In our context it is achieved by replacing $\hat \rho$ with $\hat \rho^\epsilon:=e^{i\hat{\pi}\epsilon}\hat\rho e^{i\hat{\pi}\epsilon}$. 
To see this, consider $\bra{x}\hat\rho^\epsilon\ket{x}=U(x,x+\epsilon)\psi(x+\epsilon)\bar\psi(x-\epsilon)U(x-\epsilon,x)$ (using \eqref{usin1}, \eqref{usin5} and \eqref{usin}). The fields are now separated by the small 4-vector $\epsilon$; and in addition there are Wilson lines connecting the disparate points to ensure that $\bra{x}\hat\rho^\epsilon\ket{x}$ transforms in the adjoint representation at $x$. Since $\hat \rho^{\epsilon=0}=\hat \rho$, taking the regulator $\epsilon\to 0$ at the end of the calculation should recover the original quantity.

\emph{We note that the Wigner transform of $e^{i\hat\pi\epsilon}$ is $e^{ip\epsilon}$.}

\subsection{Noether procedure in terms of Wigner-Weyl calculus}
\label{SectNoether}

In light of the regularization introduced above we replace \eqref{usin} by
\begin{align}
	S^\epsilon&=-\text{tr}_{D}\text{tr}_{ G}\text{tr}_{H}\left(\hat Q \hat\rho^\epsilon\right) \label{usinreg}
\end{align}
Since $\{\gamma^5,\hat Q\}=0$, $S^\epsilon$ is unchanged by a global non-Abelian chiral transformation: $\psi\to e^{i\alpha\gamma^5}\psi$, $\bar\psi \to \bar\psi e^{i\alpha\gamma^5}$ ($\alpha\in\mathfrak g$). The associated conserved Noether current can be extracted as follows: promote the global transformation to a local one: 
\begin{align}
	\psi(x)\to e^{i\alpha(x)\gamma^5}\psi(x) \label{this} \\
	 \bar\psi(x) \to \bar\psi(x)e^{i\alpha(x)\gamma^5} \label{this1} 
\end{align}
The variation of $S^\epsilon$ to linear order in $\alpha$ is
\begin{align}
	\delta S^\epsilon=-i\text{tr}_{ D}\text{tr}_{ G}\text{tr}_{H}\left(\alpha(\hat x)\gamma^5\{\hat Q , \hat\rho^\epsilon \}\right) \label{fader}
\end{align}
Using \eqref{TrXY} we get
\begin{align}
	\delta S^\epsilon&=\text{tr}_{G}\int d^4x\,\alpha(x)\Gamma^{\epsilon}(x)\label{a1}\\
	\text{where}\quad\Gamma^{\epsilon}(x)&=-i\text{tr}_{D}\gamma^5\int\frac{d^4p}{(2\pi)^{4}}(Q_W\bigstar\rho_W^\epsilon+\rho_W^\epsilon\bigstar Q_W) \label{ug}\\
    \text{and}\quad \rho_W^{\epsilon}&=e^{ip\epsilon}\bigstar\rho_W\bigstar e^{ip\epsilon}
\end{align}
Applying the definition of $\bigstar$ and integrating over $p$ we get
\begin{align}
	\begin{split}
		\Gamma^{\epsilon}(x)&=-i\text{tr}_{D}\gamma^5\int (2\pi)^{-8}d^4yd^4kd^4k'\,e^{-iy(k-k')}U(x,x+y/2)\\
		&\left(Q_W(x+y/2,k)\rho_W^\epsilon(x+y/2,k')+\rho_W^\epsilon(x+y/2,k)Q_W(x+y/2,k')\right)
		U(x+y/2,x) \label{geo}
	\end{split}
\end{align}
Using \eqref{adjoint} we can write this as
\begin{align}
	\Gamma^{\epsilon}(x)&=-i\text{tr}_{D}\gamma^5\int(2\pi)^{-8}d^4yd^4kd^4k'\,e^{-iy(k-k')}e^{y{\mathcal{D}}/2}\left(Q_W(x,k)\rho_W^{\epsilon}(x,k')+\rho_W^{\epsilon}(x,k)Q_W(x,k')\right) 
\end{align}
Integrating by parts we get
\begin{align}
	\Gamma^{\epsilon}(x)&=-i\text{tr}_{D}\gamma^5\int(2\pi)^{-8}d^4yd^4kd^4k'\,e^{-iy(k-k')}\left(e^{-i\partial_{k}{\mathcal{D}}/2}\left(Q_W(x,k)\rho_W^{\epsilon}(x,k')\right)+e^{i\partial_{k'}{\mathcal{D}}/2}\left(\rho_W^{\epsilon}(x,k)Q_W(x,k')\right)\right) \label{her}\\
    &\pd{=-i\text{tr}_{D}\gamma^5\int(2\pi)^{-4}d^4k\,\left(e^{-i\partial_{k}{\mathcal{D}}/2}\left(Q_W(x,k)\rho_W^{\epsilon}(x,k')\right)+e^{i\partial_{k'}{\mathcal{D}}/2}\left(\rho_W^{\epsilon}(x,k)Q_W(x,k')\right)\right)_{k'=k} }
\end{align}
Expanding in powers of the covariant derivative and using $\{\gamma_5,Q_W\}=0$ we get
\begin{align}
	\Gamma^\epsilon(x)&={\mathcal{D}}_\mu J^{\epsilon}_\mu(x) \label{a2}\\
	\text{where}\quad J^{\epsilon}_{\mu}(x)&:=-\frac{1}{2}\text{tr}_{D}\gamma^5\int \frac{d^4p}{(2\pi)^4}\left(\partial_{p_\mu}Q_W(x,p)\rho_W^{\epsilon}(x,p)-\rho_W^{\epsilon}(x,p)\partial_{p_\mu}Q_W(x,p)\right)+...
\end{align}
where we have suppressed the higher order terms in the expansion. $J_\mu^\epsilon(x)$ is the \emph{chiral current}. Its conservation on-shell follows from the fact that when the equations of motion are satisfied, $\delta S^\epsilon=0$ for arbitrary $\alpha(x)$, and therefore (inspecting \eqref{a1}) $\Gamma^\epsilon(x)=\mathcal D_\mu J_\mu^\epsilon(x)=0$.

\section{The anomaly}

\pd{From \eqref{a2} we get
\begin{equation}
     \text{tr}_G {\langle} \mathcal D_\mu J_\mu^\epsilon(x)\rangle=\tr_G\langle\Gamma^{\epsilon}(x)\rangle
\end{equation}}
where $\langle\cdot\rangle$ denotes the expectation value. 
Since $\hat Q$ is the operator governing the quadratic term in the action it is clear that $\langle\hat\rho\rangle=\mzo{-}\hat Q^{-1}$. Let us denote this as $\hat G:=\hat Q^{-1}$. \pd{From \eqref{ug}, then, we get (using $\tr \gamma_5=0$)}
\mzo{\begin{align}
    \text{tr}_G \langle \mathcal D_\mu J_\mu^\epsilon\rangle=&
    i\text{tr}_{ D}\text{tr}_{G}\gamma_5\int (2\pi)^{-4}d^4p \left(Q_W\bigstar e^{ip\epsilon}\bigstar G_W\bigstar e^{ip\epsilon} + e^{ip\epsilon}\bigstar  G_W \bigstar e^{ip\epsilon}\bigstar Q_W\right)
    \\
    =&i\text{tr}_{ D}\text{tr}_{G}\gamma_5\int (2\pi)^{-4}d^4p \left( (Q_W\bigstar e^{ip\epsilon}-e^{ip\epsilon}\bigstar Q_W) \bigstar G_W \bigstar e^{ip\epsilon} -e^{ip\epsilon} \bigstar G_W \bigstar (Q_W\bigstar e^{ip\epsilon}-e^{ip\epsilon}\bigstar Q_W) \right) \label{last1}
\end{align}}
\mza{If we add the integration over $x$ to this expression, then using cyclic property of the functional trace and chiral symmetry of $Q_W$, we will arrive at a vanishing answer for the global chiral anomaly. Integration over coordinate space makes the total expression divergent, and we need a regularization in order to fix the problem. We choose a kind of infrared regularization, in which the integration over infinite coordinate space is replaced by the integration over a finite space with boundary. It is assumed that this finite coordinate space is sufficiently large so that we may still think of momenta as continuous.
This regularization does not allow us to use identities (\ref{cyclTr}). As a result the expression for global chiral anomaly remains finite and nonzero. We require that at the boundary of coordinate space the field strength of the external gauge field is vanishing. We also require that the other sources of inhomogeneity that may be present in the considered system disappear at infinity, which effectively means that spatial derivatives of $Q_W$ tend to zero at the boundary of coordinate space.}

We wish to expand \eqref{last1} \mzz{to  quadratic order in the field strength. Expansion of the star in powers of $F$ will result in the terms $\sim e^{2 i\epsilon p} \epsilon^n F^m$ with $m\ge n$. We will see below that such terms with $n>1$ result in vanishing contributions to anomaly at $\epsilon \to 0$. Therefore, we restrict ourselves by the terms with $n = 0, 1$. }
Using the definition of $\bigstar$ it can be shown that
\begin{align}
	e^{ip\epsilon}\bigstar Q_W-Q_W\bigstar e^{ip\epsilon}=&
	e^{ip\epsilon}(2\pi)^{-4}\int d^4yd^4k\,e^{-iy(k-p)}\left(\mathscr W(x,\epsilon,y)Q_W(x+\epsilon/2,k)-\mathscr W(x,y,\epsilon)Q_W(x-\epsilon/2,k)\right)\label{com}\\
	\begin{split}
		\text{where}\quad \mathscr W(x,y,y'):=& U(x , x-(y+y')/2) U(x-(y+y')/2 , x-y'/2)  U(x-y'/2 , x+(y-y')/2)\times\\
		&U(x+(y-y')/2, x+y/2)  U(x+y/2,  x+(y+y')/2)  U(x+(y+y')/2 , x)
	\end{split}
\end{align}
is the triangular Wilson loop in Fig. \ref{mfig} without the insertions. 
\pd{
In a previous paper \cite{zubkovxavier} we reported the following expansion of $\mathscr{W}$ to order $\mathcal D^4$:
\begin{align}
\begin{split}
\mathscr{W}(x,y,y')=&1-\frac{i}{2}y^\mu y'^\nu F_{\mu\nu}(x)-\frac{i}{12}(y^\alpha-y'^\alpha)y^\mu y'^\nu \mathcal{D}_\alpha F_{\mu\nu}(x)-\frac{i}{48}(y^\alpha y^\beta+y'^\alpha y'^\beta) y^\mu y'^\nu \mathcal{D}_{\alpha}\mathcal{D}_\beta F_{\mu\nu}(x)\\
   &-\frac{1}{8}y^\mu y'^\nu y^\alpha y'^\beta F_{\mu\nu}(x)F_{\alpha\beta}(x)+\mathcal O(\mathcal D^5)
\end{split}
\end{align}
Expanding \eqref{com} to order $\epsilon$ and $O(\mathcal D^4)$ using these we get
\begin{align}
    e^{ip\epsilon}\bigstar Q_W-Q_W\bigstar e^{ip\epsilon}=&
    e^{ip\epsilon}\left(\epsilon_\mu\left(\partial_{x_\mu}Q_W-F_{\mu\nu}\partial_{p_\nu}Q_W+\frac{1}{24}\mathcal D_\alpha\mathcal D_\beta F_{\mu\nu}\partial_{p_\alpha}\partial_{p_\beta}\partial_{p_\nu}Q_W+O(\mathcal D^5)\right)+O(\epsilon^2)\right)
\end{align}
}
\pd{
Using the identity $\text{tr}_G\int d^4p\, X_W\bigstar Y_W=\int d^4p\left(e^{-i\partial_p\partial_x/2}(X_W(x,p)Y_W(x,k))\right)\big{|}_{k=p}$ (which follows from \eqref{starer}) in \eqref{last1} we obtain
\begin{align}
	 &\int d^4x\, \text{tr}_G \langle \mathcal D_\mu J_\mu^\epsilon\rangle
	=-i \int d^4x\, \text{tr}_{ D}\text{tr}_{G}\gamma_5\int (2\pi)^{-4}d^4p\, \epsilon_\mu\times\\
    &\Bigl(\Bigl[ e^{ip\epsilon}(\partial_{x_\mu}Q_W-F_{\mu\nu}\partial_{p_\nu}Q_W+\frac{1}{24}\mathcal D_\alpha\mathcal D_\beta F_{\mu\nu}\partial_{p_\alpha}\partial_{p_\beta}\partial_{p_\nu}Q_W)\Bigr] \,\mza{e^{-i\Ll\partial_p (\overleftarrow{\partial}_x + \overrightarrow{\partial}_x )/2}}\,  \Bigl[G_W \bigstar e^{ip\epsilon}\Bigr]\\ 
    & -\Bigl[e^{ip\epsilon} \bigstar G_W\Bigr]\,\mza{e^{-i\Ll\partial_p (\overleftarrow{\partial}_x + \overrightarrow{\partial}_x )/2}}\, \Bigl[e^{ip\epsilon} (\partial_{x_\mu}Q_W-F_{\mu\nu}\partial_{p_\nu}Q_W+\frac{1}{24}\mathcal D_\alpha\mathcal D_\beta F_{\mu\nu}\partial_{p_\alpha}\partial_{p_\beta}\partial_{p_\nu}Q_W)\Bigr] \Bigr) \\
    &=-i \int d^4x\,\text{tr}_{ D}\text{tr}_{G}\gamma_5\int (2\pi)^{-4}d^4p \,\epsilon_\mu e^{ip\epsilon}(\partial_{x_\mu}Q_W-F_{\mu\nu}\partial_{p_\nu}Q_W+\frac{1}{24}\mathcal D_\alpha\mathcal D_\beta F_{\mu\nu}\partial_{p_\alpha}\partial_{p_\beta}\partial_{p_\nu}Q_W)(G_W\bigstar e^{ip\epsilon}+e^{ip\epsilon}\bigstar G_W)\label{last2} \\&\quad + \mza{{\rm boundary}\,{\rm terms}}
\end{align}
\mza{Here by boundary terms we understand the integral over the boundary of coordinate space of the terms containing spatial derivatives of $Q_W$ and $G_W$ as well as the field strength $F$. These latter terms vanish according to our supposition.}
\pd{
Now
\begin{align}
	e^{ip\epsilon}\bigstar G_W+G_W\bigstar e^{ip\epsilon}
	=& e^{i p \epsilon}(2G_W + O(\epsilon))
\end{align}
(it is only necessary to expand to order $\epsilon^0$ here since \eqref{last2} is already proportional to $\epsilon$). So we get
}
\begin{equation}
    \int d^4x\, \text{tr}_G \langle \mathcal D_\mu J_\mu^\epsilon\rangle=-2i\,\text{tr}_{ D}\text{tr}_{G}\gamma_5\int (2\pi)^{-4}d^4xd^4p \,e^{2ip\epsilon} \epsilon_\mu \left(\partial_{x_\mu}Q_W-F_{\mu\nu}\partial_{p_\nu}Q_W+\frac{1}{24}\mathcal D_\alpha\mathcal D_\beta F_{\mu\nu}\partial_{p_\alpha}\partial_{p_\beta}\partial_{p_\nu}Q_W\right)G_W\label{fern}
\end{equation}
}

\begin{theorem} \label{thm1} For \mzo{a sufficiently smooth} function $f(p)$
    \begin{equation}
        \lim_{|\epsilon|\to 0} \left\langle\int d^4p\,e^{ip\epsilon} \epsilon_\mu f(p)\right\rangle=i\int d^4p\,\partial_\mu f(p) \label{thm}
    \end{equation}
    where $\langle\cdot\rangle$ denotes averaging over the directions of $\epsilon_\mu$ (i.e. for fixed $|\epsilon|$ integrate over the 3-sphere of radius $|\epsilon|$ and divide by the volume of the integration region.)
\end{theorem}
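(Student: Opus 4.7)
The proof plan rests on a single observation together with integration by parts in momentum space. The key identity is
$$\epsilon_\mu\,e^{ip\cdot\epsilon} \;=\; -i\,\partial_{p_\mu} e^{ip\cdot\epsilon},$$
which immediately suggests that the role of the factor $e^{ip\cdot\epsilon}\epsilon_\mu$, after a momentum integration by parts, is to replace $f$ by $\partial_\mu f$ with the exponential providing an oscillatory regulator that becomes trivial when $|\epsilon|\to 0$. The angular averaging prescription is what selects a direction-independent limit and what makes the boundary terms at $|p|\to\infty$ genuinely vanish.

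First, I would apply the identity above and integrate by parts in $p$ to write
$$\int d^4p\,e^{ip\cdot\epsilon}\epsilon_\mu f(p) \;=\; i\int d^4p\,e^{ip\cdot\epsilon}\partial_{p_\mu}f(p) \;+\; B(\epsilon),$$
where $B(\epsilon)$ collects the boundary contribution $-i\lim_{R\to\infty}\int_{|p|=R} e^{ip\cdot\epsilon} f(p)\,\hat p_\mu\,dS_R$. Then I would take the angular average over $\epsilon$. Since the only invariants available are $|p|$ and $|\epsilon|$, the averaged exponential depends only on $|p||\epsilon|$, and in four Euclidean dimensions the angular average of $e^{ip\cdot\epsilon}$ on the 3-sphere of radius $|\epsilon|$ equals a Bessel expression of the form $2J_1(|p||\epsilon|)/(|p||\epsilon|)$. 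This decays like $|p|^{-3/2}$ at large $|p|$, so under the smoothness hypothesis on $f$ (interpreted as controlling growth at infinity) the averaged boundary term $\langle B(\epsilon)\rangle$ vanishes for each fixed $|\epsilon|>0$.

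Having disposed of the boundary piece, it remains to take $|\epsilon|\to 0$ in $i\int d^4p\,e^{ip\cdot\epsilon}\partial_{p_\mu}f(p)$. Because $|e^{ip\cdot\epsilon}|=1$ and $\partial_\mu f$ is assumed to be integrable, dominated convergence applies and yields $i\int d^4p\,\partial_{p_\mu}f(p)$. This limit is already independent of the direction of $\epsilon$, so taking the angular average is trivial and reproduces the right-hand side of \eqref{thm}.

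The main obstacle is the rigorous treatment of the boundary term $B(\epsilon)$: in the intended applications the function $f(p)$ does not vanish at infinity (it behaves like $\partial Q_W\cdot G_W$, decaying only algebraically), so integration by parts naively produces an infinite boundary. The essential point is that angular averaging converts the oscillatory factor $e^{ip\cdot\epsilon}$ into a Bessel kernel with genuine algebraic decay in $|p|$, and this is precisely what makes the $R\to\infty$ surface integral manageable and ultimately removes any direction-dependent remnant. Spelling out explicit decay conditions on $f$ that are compatible with this Bessel suppression is the one place where the hypothesis "sufficiently smooth" must be quantified.
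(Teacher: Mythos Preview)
Your proposal is correct and is essentially the same approach as the paper's proof: the identity $\epsilon_\mu e^{ip\cdot\epsilon}=-i\,\partial_{p_\mu}e^{ip\cdot\epsilon}$ followed by integration by parts is exactly the Fourier-transform identity $\epsilon_\mu\,\mathrm{FT}(f)=\mathrm{FT}(i\partial_\mu f)$ that the paper invokes, after which both you and the paper angularly average and let $|\epsilon|\to 0$. Your treatment is more explicit about the boundary term and the role of the angular average in suppressing it (the paper's three-line argument glosses over this entirely), but the core argument is identical.
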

\begin{proof}
Let us denote  by ${\rm FT}(f) = \int \frac{d^4 p}{(2\pi)^4} e^{ip\epsilon} f(p)$ the Fourier transform. And let us denote the inverse Fourier transform by $\rm IFT$.  	
Now $f=\text{IFT}(\text{FT}(f))\implies \partial_\mu f=\text{IFT}(-i\epsilon_\mu\text{FT}(f))\implies \text{FT}(i\partial_\mu f)=\epsilon_\mu\text{FT}(f)\implies \langle\text{FT}(i\partial_\mu f)\rangle=\langle\epsilon_\mu\text{FT}(f)\rangle$. Taking the limit $|\epsilon|\to 0$ on both sides gives \eqref{thm}. Henceforth, for ease, we will denote the process of averaging over directions of $\epsilon_\mu$ and taking $|\epsilon|\to 0$, collectively, as $\lim_{|\epsilon|\to 0}$.
\end{proof}
Applying this to \eqref{fern} we get
\pd{\begin{align}
    \mathscr A:=&\mzo{}\lim_{|\epsilon|\to 0}\int d^4x\, \text{tr}_G \langle \mathcal D_\mu J_\mu^\epsilon\rangle=\mzo{+}\text{tr}_{ D}\text{tr}_{G}\gamma_5\int (2\pi)^{-4}d^4xd^4p\, \partial_{p_\mu} \left(\left(\partial_{x_\mu}Q_W-F_{\mu\nu}\partial_{p_\nu}Q_W+\frac{1}{24}\mathcal D_\alpha\mathcal D_\beta F_{\mu\nu}\partial_{p_\alpha}\partial_{p_\beta}\partial_{p_\nu}Q_W\right)G_W\right) \label{fern1}
\end{align}}
$\mathscr A$ is the anomaly. 

\subsection{\texorpdfstring{Smooth deformation of the system removing $x$ dependence.}{}}

Let us assume that $Q_W(x,p)$ is homotopic to a function $\widetilde Q(p)$. Since $\mathscr A$ is the topological index of the relevant operator (as discussed in the introduction) \eqref{fern1} should be invariant if we replace $Q_W(x,p)$ with $\W{Q}(p)$ (since this amounts to a smooth deformation of the relevant operator):
\pd{\begin{equation}
   \mathscr A=\mzo{-\text{tr}_{ D}}\text{tr}_{G}\gamma_5\int (2\pi)^{-4}d^4xd^4p\, \partial_{p_\mu} \left((F_{\mu\nu}\partial_{p_\nu}\W{Q}-\frac{1}{24}\mathcal D_\alpha\mathcal D_\beta F_{\mu\nu}\partial_{p_\alpha}\partial_{p_\beta}\partial_{p_\nu}\W{Q})\W{G}_W\right) \label{ind1}
\end{equation}}
where $\widetilde G_W$ is the Greens function derived from $\widetilde Q$, i.e. it satisfies $\W{G}_W\bigstar \W{Q}=1$.


We now make use of a result that we derived in a previous paper \cite{zubkovxavier}: $\widetilde G_W$ has an expansion in powers of the field strength:
\begin{align}
    \widetilde G_W=&\widetilde G^{(0)}+\frac{i}{2}\widetilde G^{(0)} \partial_{p_{\alpha}}\W{Q}\, \widetilde G^{(0)}  \partial_{p_\beta} \widetilde Q\,\widetilde G^{(0)} F_{\alpha\beta}+O(F^2)
\end{align}
where $\widetilde G^{(0)}$ is the solution to $\widetilde G^{(0)} \widetilde Q=1$.
We substitute this into \eqref{ind1} and obtain 
\begin{align}
    \mathscr A=&\mzo{-\int d^4x} \, \tr(F_{\mu\nu})\int (2\pi)^{-4}d^4p\,\text{tr}_{ D}(\gamma_5\partial_{p_\nu}\widetilde Q\partial_{p_\mu}\widetilde G^{(0)})  \label{firsty}  \\
    &\mzo{+}\frac{i}{2}\int d^4x\, \text{tr}(F_{\mu\nu}F_{\alpha\beta})\int (2\pi)^{-4}d^4p\,   \partial_{p_\mu}S_{\alpha\beta\nu}\\
    &\pd{+\frac{1}{24}\int d^4x\, \partial_\alpha\partial_\beta\tr(F_{\mu\nu})\int (2\pi)^{-4}d^4p\tr_D(\gamma_5\partial_{p_{\alpha}}\partial_{p_{\beta}}\partial_{p_{\nu}}\W{Q}\partial_{p_{\mu}}\W{G}^{(0)})}\label{reni}
    \\
    \text{where}\quad S_{\alpha\beta\nu}(x):=&\frac{1}{2}\text{tr}_D\left(\gamma^5 \widetilde G^{(0)}\partial_{p_\alpha} \widetilde Q \,\widetilde G^{(0)}\partial_{p_\beta} \widetilde Q\, \widetilde G^{(0)}\partial_{p_\nu} \widetilde Q\right)-(\alpha\leftrightarrow \beta)\label{S3}
\end{align}
Now, consider 
\begin{align}
    \tr_{ D}(\gamma_5\partial_{p_\nu}\widetilde Q\partial_{p_\mu}\widetilde G^{(0)})=&\text{tr}_{ D}(\gamma_5\partial_{p_\nu}\widetilde Q\W{G}^{(0)}\W{Q}\partial_{p_\mu}\widetilde G^{(0)})\\
    =&-\text{tr}_{ D}(\gamma_5\widetilde Q\partial_{p_\nu}\W{G}^{(0)}\W{Q}\partial_{p_\mu}\widetilde G^{(0)})\\
    =&-\text{tr}_{ D}(\gamma_5\widetilde Q\partial_{p_\mu}\W{G}^{(0)}\W{Q}\partial_{p_\nu}\widetilde G^{(0)}) 
\end{align}
where in the first line we've inserted $1=\W{Q}\W{G}^{(0)}$, in the second line we've used $\partial_{p_\nu}\W{Q}\W{G}^{(0)}=-\W{Q}\partial_{p_\nu}\W{G}^{(0)}$ and in the last line we've used the cyclic property of trace and $\{\gamma_5,\W{Q}\}=\{\gamma_5,\W{G}^{(0)}\}=0$. The last two lines show that the expression is $\mu\leftrightarrow\nu$ symmetric. Multiplication by $F_{\mu\nu}$ in \eqref{firsty} then kills this term.
Now, using similar arguments, it is easy to verify that $S_{\alpha\beta\nu}$ is totally antisymmetric. Therefore it is a 3-form in 4 dimensional momentum space.
Now, it is easy to verify that $\text{tr}(F_{\mu[\nu}F_{\alpha\beta]})$ is totally antisymmetric and therefore $\text{tr}(F_{\mu[\nu}F_{\alpha\beta]})=\varepsilon_{\mu\nu\alpha\beta}\tr(FF^\star)/12$ where \mzo{$F^\star_{\mu\nu} =\frac{1}{2} \epsilon_{\mu\nu\rho\sigma}F_{\rho\sigma}$} is the Hodge dual.
\pd{Now \eqref{reni} vanishes since we assume the field strength is zero on the boundary.}
So we arrive at
\begin{align}
	\mathscr A=
    &\mzo{-2i N_3} \int \frac{1}{16\pi^2}d^4x\, \text{tr}(FF^\star)\label{index05}\\
    \text{where}\quad N_3=&\mzo{\frac{1}{8\pi^2}}\int  dS
\end{align}
($S$ is a 3-form so $dS$ is a top-form which is integrated over 4 dimensional momentum space.)
$S$ may have a pole at $p=0$ so the integral may not be defined. If we introduce a small mass into the system, however, the pole moves into the complex plane. Let us assume this has been done so that we may disregard the pole at $p=0$ and then the small mass taken to zero. 
Then Stokes' theorem gives us
\begin{align}
	N_3
	=& \mzo{\frac{1}{8\pi^2}}\int_{S^3_{\infty}}  S
\end{align}
where $S^3_\infty$ is the 3-sphere at infinity, which we treat as the boundary of momentum space.
We can deform this sphere at infinity to the 3-surface $\Sigma$ defined as the union of the two hyperplanes $p_4=0^{\pm}$ (whose orientation is defined by the normal vector $\sgn(p_4)\partial_{p_4}$) slightly above and below $p_4=0$. Then we get
\begin{align}
	N_3
	=& \mzo{\frac{1}{8\pi^2}}\int_{\Sigma}  S=\mzo{\frac{1}{48\pi^2}}\int _\Sigma\text{tr}_D\left(\gamma^5 \widetilde G^{(0)}d \widetilde Q \,\widetilde G^{(0)}\wedge d \widetilde Q \widetilde G^{(0)}\wedge d \widetilde Q\right) \label{fer1}
\end{align}
Let $\star=e^{\frac{i}{2}(\Ll{\partial}_x\Rr\partial_p-\Ll\partial_p\Rr\partial_x)}$ denote the Moyal star product \cite{zachos2005quantum, moyal1949quantum}. Since there is no $x$ dependence in \eqref{fer1} we can insert $\star$ for free. We can also insert $1=\frac{1}{|V|}\int d^3\vec x$ where $|V|$ is the volume of space. So we can write \eqref{fer1} as
\begin{align}
	N_3
	=\mzo{\frac{1}{48\pi^2|V|}}\int d^3\vec x\int _\Sigma\text{tr}_D\left(\gamma^5 \widetilde G^{(0)}\star d \widetilde Q\star \widetilde G^{(0)}\star\wedge d \widetilde Q \star\widetilde G^{(0)}\star\wedge d \widetilde Q\right) \label{fer11}
\end{align}
Now we observe that \eqref{fer11} is a topological invariant: i.e. it is invariant if we replace $\W{Q}(p)$ with the original $Q_W(x,p)$ to which it is homotopic. Similarly, we have to replace $\W{G}^{(0)}$ with $G^{(0)}$ which is the solution to $Q_W(x,p)\star G^{(0)}=1$. So we can write \eqref{fer11} as
\begin{align}
	N_3
	=&\mzo{\frac{1}{48\pi^2|V|}}\int d^3\vec x\int _\Sigma\text{tr}_D\left(\gamma^5  G^{(0)}\star d Q_W\star  G^{(0)}\star\wedge d  Q_W \star G^{(0)}\star\wedge d  Q_W\right)\label{der1}
\end{align}
\emph{We conclude finally} that the chiral anomaly $\mathscr A$ for $\hat Q$ is given by 
\begin{equation}
    \mathscr A=\mzo{-2i N_3} \int \frac{1}{16\pi^2}d^4x\, \text{tr}(FF^\star) \label{ganm}
\end{equation}
with $N_3$ given by \eqref{der1}.

We note that if we make an inverse Wick rotation back to Minkowski spacetime then we obtain
\begin{equation}
    \mathscr A=\mzo{}N_3\times \frac{1}{2\pi^2}\int d^4x\,\text{tr}({\bf E}.{\bf B})
\end{equation}

\section{Illustrative example}
\label{SectIllustr}

To illustrate the formula derived above let us consider a system of Dirac fermions with (as it will be shown) $N_3\neq 1$ (ordinary Dirac fermions have $N_3=1$ as shown, for example, in \cite{SuleymanovZubkov2020}). 
Consider the system with
\begin{align}
	\hat{Q} =&  \mzc{\left(\begin{array}{cc}0 & \hat O^\dagger\\\hat O&0 \end{array}\right)}\\
    \text{where}\quad 
	 \hat O=& \hat{\pi}_4 \pd{+} i   \left( \begin{array}{cc}\hat{\pi}_3 & \kappa(\hat{\pi}_1-i\hat{\pi}_2)^n \\\kappa(\hat{\pi}_1+i\hat{\pi}_2)^n&-\hat{\pi}_3 \end{array}\right)
\end{align}
with integer $n$, constant $\kappa$ and
$\hat{\pi}_\mu := \hat{p}_\mu - A_\mu(\hat x)$ (see \eqref{pi}).
(Note that ordinary Dirac fermions correspond to the case $\kappa=n=1$.)
In this model
\begin{align}
	S_{\alpha\beta\nu}(p)=&\pd{+}\varepsilon_{\alpha\beta\nu\sigma}\tilde p_\sigma\frac{4 \kappa ^2 n^2 \left(p_1^2+p_2^2\right)^{n-1}}{\left(\kappa ^2
		\left(p_1^2+p_2^2\right)^n+p_3^2+p_4^2\right)^2}\label{S1}\\
	\text{where}\quad \tilde p_\sigma:=&(p_1/n,p_2/n,p_3,p_4)
\end{align}
(see \eqref{S3}) and it can be shown that
\begin{equation}
    \int dp_1dp_2dp_3 \frac{1}{8\pi^2} S_{123}(p_1,p_2,p_3,p_4)=\pd{+}\frac{n}{2}\sgn(p_4)
\end{equation}
Using \eqref{fer1} then gives $N_3=n$.
Hence the formula for the anomaly here is
\begin{equation}
    \mathscr A=\mzo{-n}\times \frac{i}{4\pi^2}\int \tr (F\wedge F)
\end{equation}

\section{Conclusion and discussion}

In this paper we considered a noninteracting model of fermions in the presence of an external gauge field (which is, in general, non-Abelian). The system under consideration may possess nontrivial internal momentum-space topology, arising from the nontrivial dependence of the generalized Dirac operator on momenta. Moreover, we considered the case where these systems are inhomogeneous even if the external gauge field is removed. We required, however, that this additional inhomogeneity can be eliminated through a smooth modification of the system.

Under these conditions, we found that the anomaly arising from the generalized Dirac operator factorizes into the product of the topological charge carried by the external gauge field (equal to the number of instantons) and the topological invariant $N_3$, which is responsible for the stability of the Fermi surfaces/Fermi points.
Notably, $N_3$ also appears in the expression for the conductivity of the chiral separation effect (both Abelian and non-Abelian).

\mzu{The approximation we are considering is the “derivative expansion” in which the gauge field $A_\mu$ and partial derivatives $\partial_\mu$ are considered "order 1": for example $\partial_\mu A_\nu$ and $A_\mu A_\nu$ are both of “order 2”. We refered to this as to expansion in powers of the covariant derivative $D_\mu \equiv \partial_\mu-iA_\mu$. Our result is valid to order $D^4$, i.e. to quadratic order in the field strength. Beyond this order we do not exclude the appearance of the contributions like $D_\alpha F_{\mu\nu} F_{\rho\sigma}$ or $F^3$. Notice that these extra terms do not appear in ordinary relativistic quantum field theory with the conventional Dirac operator linear in momentum. Moreover, one may expect that they will disappear order by order in the considered models as well. This hope is based on the observation that we actually deal with a version of Atiyah – Singer theorem. The right hand side in our case is given by the product of phase space topological invariant and the number of instantons. There is a limited number of topological invariants, and in the situation when the two topologies (that of momentum space and that of the gauge field) are factorized, presumably, the expression obtained by us is the only choice. However, this issue is out of the scope of the present paper.}

\mzu{Since we are considering the models with a rather general form of Dirac operator, the possible problem of the ghost states is worth to be commented. There are basically two ways to look at this question. The first one is based on the experience of relativistic quantum field theory. The presence of the higher derivative terms in the classical action (typically, the derivatives are considered up to the fourth order) cause problems in canonical quantization related to the appearance of indefinite metric in Hilbert space to be constructed (see, for example, \cite{SMILGA2005598} and references therein). The second way is based more on the experience of condensed matter physics theory. There the one – particle Hamiltonian that may be an arbitrary function of momentum is a conventional rather than marginal situation. Say, in the tight binding models of solid state physics the terms proportional to cosine or sine of momentum are typical. The second quantization based on the one particle Hilbert space does not encounter any difficulties: the complete field theory Hilbert space does not suffer from the problems with indefinite metrics \cite{altland2010condensed}. In the present paper we rely on this approach, and assume that the relativistic field theory appears as a certain limiting case of a condensed – matter like construction \cite{volovik2003universe}. }

It would be interesting to extend this research in several directions. In particular, the role of interactions may be investigated. 
\mzu{Their consideration is assumed to become the subject of a future work. Then the violation of chiral symmetry due to the chiral anomaly together with the absence of scale symmetry might cause the chiral symmetry breakdown at the level of the effective Dirac operator (that is inverse to the two point Green function with interaction corrections). At low energies the mass term will appear, i.e. the term with the transition between the left – handed and the right – handed fermions. If the non – Abelian gauge field entering expression for the chiral anomaly is dynamical, then this mass term is suppressed when the instanton density is suppressed. For the gauge field with the standard Yang – Mills action this occurs in the weak coupling regime.  Then the chiral symmetry is still present on the level of effective action. We expect \cite{zubkov2023effect}, that in this situation the result of the present paper should survive in some form (in the presence of additional non - dynamical external gauge field).} In the opposite case of strong coupling Eq. (\ref{ganm}) may be used, when we consider fermions in the presence of the given fixed configuration of the dynamical gauge field. (The divergence of axial current then becomes a dynamical variable as well as the gauge field.) Then Eq. (\ref{ganm}) may even become useful in consideration of various non - perturbative effects, say, being unified with the technique of central or Abelian projection (see, for example,  \cite{bakker1999central} and references therein).

 Furthermore, it would be worthwhile to consider cases where the inhomogeneity cannot be removed through smooth deformations. We expect that in such situations, \eqref{ganm} may still hold in some form under less restrictive conditions than those considered above. However, a comprehensive analysis of the conditions required for the continued validity of \eqref{ganm} lies beyond the scope of the present paper.  \mzu{In this respect it is worth mentioning that the topological invariant of the form of \eqref{der1} is known in the noncommutative geometry as pairing of the Connes - Chern character with a cyclic cocycle of the cyclic cohomology theory developed by Alain Connes \cite{Connes,khalkhali2013basic} (the other version of this formalism may be found in \cite{fedosov1996deformation}). This theory may, presumably, be used for  the calculation of the value of $N_3$ in rather general cases. More simple technique that may be applied to a certain class of such systems has been developed for such a calculation recently in \cite{Z2025}.} 

Another interesting question concerns the possible observation of phenomena, connected to the chiral anomaly, that can reflect possibly non-trivial (i.e. $N_3>1$) values of $N_3$ appearing in \eqref{ganm} (arising not simply from an increased number of fermion fields but rather from nontrivial momentum-space topology -- as in the example in \S\ref{SectIllustr}). 
It is possible that quark matter under exceptional external conditions (such as strong magnetic fields, high temperature, or pressure) may exhibit nontrivial values of $N_3$ \cite{zubkov2018momentum}. \mzu{This concerns also the Standard Model in general \cite{zubkov2012momentum,volovik2017standard}. Notice that the generalizations of the Standard Model, say with the composite Higgs bosons are related to condensed matter theory, and in particular to the theory of fermionic superfluids \cite{volovik2013nambu}.}
A similar situation may, in principle, also occur in solid-state systems that simulate high-energy physics in the laboratory; In some Weyl or Dirac semimetals, ordinary Weyl or Dirac points may merge, forming systems with $N_3 > 1$ (see the extensive discussion of this issue in \cite{volovik2003universe} and references therein). 

Finally, we would like to mention the potential possibility to extend our results to the case of the other currents, which are conserved classically. In particular, for the non - interacting systems with relativistic symmetry one can consider the currents generated by zilch fields \cite{kibble1965conservation,lipkin1964existence,huang2020zilch,chernodub2018zilch}. In case of the more complicated systems under certain conditions multiple classically conserved currents may be defined as well \cite{konstein2001conformal}. Some of the currents associated with zilch, as well as the other classically conserved currents may be anomalous \cite{copetti2018higher,artem2021zilch}. Correspondingly, the appearance of an analogue of our Eq. (\ref{ganm}) might be expected. However,  these  (potentially anomalous) currents  contain the algebraic structures that are not present in the ordinary axial current. Presumably, such structures are to replace $\gamma^5$ in Eq. (\ref{der1}). It is not clear whether the resulting expression in general case will remain a topological invariant. It could be, however, that it will remain robust to the smooth modifications of the system that respect symmetries resulted in the appearance of the corresponding classically conserved current. The consideration of this issue, however, is far out of the scope of the present paper.

\appendix

\bibliography{biblio_corrected}

\end{document}